\algnewcommand\algorithmicforeach{\textbf{for each}}
\algrenewcommand\algorithmicprocedure{\textbf{proc}}
\DeclareMathAlphabet{\mathqhv}{OT1}{qhv}{m}{n}
\title{Efficient construction of the extended BWT from grammar-compressed DNA sequencing reads} 
\titlerunning{Building the eBWT for DNA sequencing read} 
\author{Diego D\'iaz-Dom\'inguez}{CeBiB -- Centre For Biotechnology and Bioengineering, Department of Computer Science, University of Chile, Santiago, Chile \and }{diediaz@dcc.uchile.cl}{https://orcid.org/0000-0002-1825-0097}{ANID Basal Funds FB0001 and Ph.D. Scholarship 21171332, Chile}
\author{Gonzalo Navarro}{CeBiB -- Centre For Biotechnology and Bioengineering, Department of Computer Science, University of Chile, Santiago, Chile}{gnavarro@dcc.uchile.cl}{[orcid]}{ANID Basal Funds FB0001 and Fondecyt Grant 1-200038, Chile}
\authorrunning{D. D\'iaz-Dom\'inguez and G. Navarro} 
\keywords{BWT, grammar compression, DNA sequencing reads} 
\begin{document}

\maketitle

\begin{abstract}
We present an algorithm for building the extended BWT (eBWT) of a string collection from its grammar-compressed representation. Our technique exploits the string repetitions captured by the grammar to boost the computation of the eBWT. Thus, the more repetitive the collection is, the lower are the resources we use per input symbol. We rely on a new grammar recently proposed at DCC'21 whose nonterminals serve as building blocks for inducing the eBWT. A relevant application for this idea is the construction of self-indexes for analyzing sequencing {\em reads} --- massive and repetitive string collections of raw genomic data. Self-indexes have become increasingly popular in Bioinformatics as they can encode more information in less space. Our efficient eBWT construction opens the door to perform accurate bioinformatic analyses on more massive sequence datasets, which are not tractable with current eBWT construction techniques. 
\end{abstract}

\section{Introduction}

DNA sequencing reads, or just reads, are massive collections of short strings that encode overlapping segments of a genome. In recent years, they have become more accessible to the general public, and nowadays, they are the most common input for genomic analyses. This poses the challenge of the high computational cost for extracting information from them. Most bioinformatic pipelines must resort to lossy data structures or heuristics, because reads are too massive to fit in main memory~\cite{s15big}. The FM-index's~\cite{ferragina2005indexing} success in compressing and indexing DNA sequences~\cite{li09fast,Langmead2009} opened the door to a new family of techniques that store the full read data compactly in main memory.  This representation is appealing because it retains more information in less space, enabling more accurate biological results~\cite{kaye21the}. One popular FM-index variant for read sets is based on the \emph{extended Burrows-Wheeler Transform} (eBWT)~\cite{mrrs2007ext,bauer13lw,n2007com}. The bionformatics and stringology communities are aware of the potential of the eBWT and have been proposing genomic analysis algorithms on top of eBWT-based self-indexes for years~\cite{cox2012comparing,dolle2017using,guer2019li,pr19snps}.

A bottleneck for the application of those analyses on massive read collections is the high computational cost and memory requirements to build the eBWT. Some authors have proposed semi-external construction algorithms to solve this problem~\cite{bauer13lw,egidi19ext,bon20com}. Still, these techniques slow down rapidly as the input read collection grows. 

The most expensive aspect of computing the eBWT is the lexicographical sorting of the strings' suffixes. In this regard, a recent work~\cite{b2019pr} proposed to speed up the process by exploiting the fact that read sets with high coverage exhibit a good degree of repetitiveness. The goal is to first parse the input to capture long repetitive phrases in the text so as to carry out the suffix comparisons only on the distinct phrases, and then extrapolate the results to the rest of the text. This idea works well for sets of assembled genomes, where repetitiveness is extremely high, but not for reads, where the repetitive phrases are much shorter and scattered through the strings.
Reads sets are much more frequent than fully assembled genomes in applications.

Recently, we presented a fast and space-efficient grammar compressor aimed at read collections~\cite{diaz2021gram}. Apart from showing that grammar compression yields significant space reductions on those datasets, we sketched the potential of the resulting grammar to compute the eBWT directly from it. Similarly to the idea of Boucher \emph{et al.}~\cite{b2019pr}, this method takes advantage of the short repetitive patterns in the reads to reduce the computational requirements. To the best of our knowledge, there are no other repetition-aware algorithms to build the eBWT. An efficient realization of this idea would reduce the computational requirements for indexing reads, thus allowing more accurate genomic analyses on massive datasets.

\textbf{Our contribution.} In this paper we give a detailed description and a parallel implementation of the algorithm for building the eBWT from the grammar of D\'iaz and Navarro~\cite{diaz2021gram}. Our approach not only exploits the repetitions captured by the grammar but also the runs of equals symbols in the eBWT. This makes the time and space per input symbol we require for the construction decrease as the input becomes more repetitive. Our experiments on real datasets showed that our method is competitive with the state-of-the-art algorithms, and that can be the most efficient when the input is massive and with high DNA coverage.

\section{Related concepts}

\subsection{The extended BWT}
Consider a string $T[1..n-1]$ over alphabet $\Sigma[2..\sigma]$, and the sentinel symbol $\Sigma[1]=\texttt{\$}$, which we append at the end of $T$. The \emph{suffix array} (SA)~\cite{MM93} of $T$ is a permutation of $[n]$ that enumerates the suffixes $T[i..n]$ of $T$ in increasing lexicographic order, $T[SA[i]..n] < T[SA[i+1]..n]$. 
The BWT~\cite{bw94} is a permutation of the symbols of $T$ obtained by extracting the symbol that precedes each suffix in $SA$, that is, $BWT[i] = T[SA[i]-1]$ (assuming $T[0]=T[n]=\texttt{\$}$). 
A run-length compressed representation of the BWT
\cite{ma2010s} adds sublinear-size structures that compute, in logarithmic time, the so-called $\mathsf{LF}$ step and its inverse: if $BWT[j]$ corresponds to $T[i]$ and $BWT[j']$ to $T[i-1]$ (or to $T[n]=\textsf{\$}$ if $i=1$), then $\mathsf{LF}(j)=j'$ and $\mathsf{LF}^{-1}(j')=j$. Note that $\mathsf{LF}$ regards $T$ as a circular string.

Let $\mathcal{T}=\{T_1,T_2,...T_m\}$ be a collection of $m$ strings of average size $k$. We then define the string $T[1..n]=T_1\texttt{\$}T_2\texttt{\$}..T_n\texttt{\$}$. The extended BWT (eBWT) of $\mathcal{T}$ \cite{cox2012large} regards it as a set of independent circular strings: the BWT of $T$ is slightly modified so that, if $eBWT[j]$ corresponds to $T_i[1]$ inside $T$, then $\mathsf{LF}(j)=j'$, so that $eBWT[j']$ corresponds to the sentinel \texttt{\$} at the end of $T_i$, not of $T_{i-1}$.

\subsection{Level-Order Unary Degree Sequence (LOUDS).}

LOUDS \cite{j89} is a succinct representation that encodes an ordinal tree $T'$ with $t$ nodes into a bitmap $B[1..2t+1]$, by traversing its nodes in levelwise order and writing down its arities in unary. The nodes are identified by the position where their description start in $B$. Adding $o(t)$ bits on top of $B$ enables constant-time operations like $\textsf{parent}(u)$ (the parent of node $u$), $\textsf{child}(u,i)$ (the $i$-th child of $u$), $\textsf{psibling}(u)$ (the sibling preceding $u$), $\textsf{nodemap}(u)$ (the level-wise rank of node $u$), $\mathsf{leafrank}(u)$ (the number of leaves in level-order up to leaf $u$), $\mathsf{internalrank}(u)$ (the rank of the internal node $u$ in level-order), and $\mathsf{internalselect}(r)$ (the identifier of the \emph{r-th} internal node in level order).

\subsection{Grammar compression}

Grammar compression consists of encoding $T$ as a small context-free grammar $\mathcal{G}$ that only produces $T$. Formally, a grammar is a tuple $(V,\Sigma, \mathcal{R}, \mathqhv{S})$, where $V$ is the set of nonterminals, $\Sigma$ is the set of terminals, $\mathcal{R}$ is the set of replacement rules and $\mathqhv{S} \in V$ is the start symbol. The right-hand side of $\mathqhv{S} \rightarrow C \in \mathcal{R}$ is referred to as the compressed form of $T$. The size of $\mathcal{G}$ is usually measured in terms of $r=|\mathcal{R}|$; the number of rules, $g$; the sum of the length of the right-hand sides of $\mathcal{R}$, and $c=|C|$; the size of the compressed string. The more repetitive $T$ is, the smaller are these values.

\subsubsection{The LMSg algorithm}\label{ssec:lmsg}

$\mathsf{LMSg}$~\cite{diaz2021gram} is an iterative algorithm aimed to grammar-compress string collections. It is based on the concept of induced suffix sorting of Nong \emph{et al}.~\cite{n2009li}. The following definitions of~\cite{n2009li} are relevant for $\mathsf{LMSg}$:

\begin{definition}
A character $T[i]$ is called L-type if $T[i]>T[i+1]$ or if $T[i]=T[i+1]$ and $T[i+1]$ is also L-type. Equivalently, $T[i]$ is said to be S-type if $T[i]<T[i+1]$ or if $T[i]=T[i+1]$ and $T[i+1]$ is also S-type. By default, symbol $T[n]$, the one with the sentinel, is S-type.  
\end{definition}

\begin{definition}
$T[i]$ is called LMS-type if $T[i]$ is S-type and $T[i-1]$ is L-type.
\end{definition}

\begin{definition}
A LMS substring is (i) a substring $T[i..j]$ with both $T[i]$ and $T[j]$ being LMS characters, and there is no other LMS character in the substring, for $i \neq j$; or (ii) the sentinel itself.
\end{definition}

In every iteration $i$, $\mathsf{LMSg}$ scans the input text $T^i$ ($T^1=T$) from right to left to compute the \emph{LMS}-substrings. For each \emph{LMS}-substring $T^i[j..j']$, the algorithm discards the first symbol. If the remaining phrase $F=T[j+1..j']$ has length two or more and at least one of its symbols is repeated in $T^i$, then it records $F$ in a set $\mathcal{D}^{i}$. If $F$ does not meet the condition, then it discards the phrase and inserts its characters into another set $I^i$. Additionally, when $F$ is the suffix-prefix concatenation of two consecutive strings of $T$, $\mathsf{LMSg}$ splits it in two halves, $F_{l}=F[1..u]$ and $F_{r}[u+1..|F|]$, where $F[u]$ contains the dummy symbol. $F_{l}$ and $F_{l}$ are two independent phrases that can be inserted to either $\mathcal{D}^i$ or $I^i$.

After the scan, $\mathsf{LMSg}$ sorts the phrases in $I^i \cup \mathcal{D}^i$ in lexicographical order. If $F \in \mathcal{D}^i$ is a prefix in another phrase $F' \in \mathcal{D}^i$, then the shortest one gets the greatest lexicographical rank (please see \cite{n2009li}). After sorting, the algorithm creates a new rule $\mathqhv{X} \rightarrow F$ for every $F \in \mathcal{D}^i$, where $\mathqhv{X}$ is the sum of $r'$; the highest nonterminal in $V$ before iteration $i$, plus $b$; the lexicographical rank of $F$ in $\mathcal{D}^i \cup I^i$. Every symbol $\mathqhv{Y} \in I^i$ is a nonterminal that already exists in $V$, with a rule $\mathqhv{Y} \rightarrow E$, with $E \in \mathcal{D}^{i'}$ and $i'<i$. Hence, $\mathsf{LMSg}$ updates its value to $\mathqhv{Y}=r'+x$, where $x$ is the lexicographical rank of $\mathqhv{Y}$ in $\mathcal{D}^i \cup I^i$. It also updates the occurrences of $\mathqhv{Y}$ in the right-hand sides of $\mathcal{R}$ to maintain the correctness in the grammar. The last step in the iteration is to replace the partition phrases in $T^i$ with their nonterminal values. This process yields a new text $T^{i+1}$ for the next iteration. The iterations stop when, after scanning $T^i$, no phrases are inserted to $\mathcal{D}^i$. In such case, the algorithm creates the rule for the start symbol as $\mathqhv{S} \rightarrow T^i$. A graphical example of the procedure is shown in Figure \ref{fig:lmsg}A.

\textbf{Post-processing the grammar}. $\mathsf{LMSg}$ collapses the grammar by decreasing every nonterminal $\mathqhv{X} \in V$ to the smallest available symbol $\mathqhv{X'} \notin V \cup \Sigma$.  After the collapse, the algorithm recursively creates new rules with the repeated suffixes of size two in the right-hands of $\mathcal{R}$. These extra rules are called \emph{SuffixPair}. The final grammar for the example of Figure \ref{fig:lmsg}A is shown in Figure \ref{fig:lmsg}B. For simplicity, the symbols are not collapsed in the example.

The $\mathsf{LMSg}$ algorithm ensures the following properties in the grammar to build the eBWT:

\begin{lemma}\label{lem:gram}
For two different nonterminals $\mathqhv{X},\mathqhv{Y} \in V$ produced in the same iteration of $\mathsf{LMSg}$, if $\mathqhv{X}<\mathqhv{Y}$, then the suffixes of $T$ whose prefixes are compressed as $\mathqhv{X}$ are lexicographically smaller than the suffixes whose prefixes are compressed as $\mathqhv{Y}$.
\end{lemma}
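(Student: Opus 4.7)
The plan is to connect the numeric inequality $\mathqhv{X}<\mathqhv{Y}$ directly to the lex order of the expansions of $\mathqhv{X}$ and $\mathqhv{Y}$ as substrings of $T$, and then use the LMS-substring structure to lift this prefix comparison to the full suffixes of $T$. The starting point is the rank-assignment rule of $\mathsf{LMSg}$: if $\mathqhv{X}$ and $\mathqhv{Y}$ are both minted in iteration $i$, their numeric values are $r'+b_X$ and $r'+b_Y$, where $b_X,b_Y$ are the lex ranks of their right-hand sides $F_X,F_Y$ inside $\mathcal{D}^i\cup I^i$, using the SA-IS tie-breaking convention that a proper prefix receives a larger rank than its extender. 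Hence $\mathqhv{X}<\mathqhv{Y}$ is equivalent to $F_X<_{lex}F_Y$ as strings over the alphabet currently used by $T^i$.

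The next step is to show that this alphabet order agrees with the lex order of expansions in $T$. I would prove this as a side invariant by induction on $i$. The base case is immediate since $T^1=T$ uses the natural terminal order of $\Sigma$. For the inductive step, the key observation is that in every iteration $\mathsf{LMSg}$ re-ranks the old nonterminals appearing in $I^i$ together with the new phrases in $\mathcal{D}^i$ in a single lex sort, and then propagates the new values into every right-hand side of $\mathcal{R}$. Thus immediately after iteration $i$ the order of any two symbols used in $T^{i+1}$ still reflects the lex order of their expansions in $T$. Combined with the previous paragraph, $\mathqhv{X}<\mathqhv{Y}$ implies that the expansion of $F_X$ is lex-smaller than the expansion of $F_Y$ as substrings of $T$.

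Finally, to pass from a prefix comparison to a suffix comparison I would invoke the induced-suffix-sorting structure that $\mathsf{LMSg}$ inherits from Nong \emph{et al.}~\cite{n2009li}. Each phrase $F_X$ begins just after an LMS character of $T^i$ and ends at the next LMS character, so after unfolding the iterations the expansion of $F_X$ sits between two LMS-positions of $T$. When the two expansions disagree at some position, that disagreement already settles the order of the two suffixes of $T$ under consideration; when one expansion is a prefix of the other, the SA-IS tie-breaking rule exactly mirrors the argument of \cite{n2009li}, which uses the S-type status of the trailing LMS character to conclude that the shorter prefix corresponds to the lex-greater suffix.

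The part I expect to be the trickiest is the prefix case of the last step, together with the alphabet invariant of the middle step. The invariant requires verifying that re-ranking the nonterminals in $I^i$ does not break the order among symbols already buried inside older rules, which is routine but error-prone bookkeeping. The prefix case, on the other hand, needs the combinatorial guarantee that the L/S-structure around the LMS boundary always agrees with the tie-breaking rule; this is the place where the argument genuinely relies on properties of LMS-sorting rather than of grammar compression, and where I would lean most heavily on~\cite{n2009li}.
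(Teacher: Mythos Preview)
The paper does not actually prove Lemma~\ref{lem:gram}: it is stated as one of the properties ``the $\mathsf{LMSg}$ algorithm ensures'' and the reader is referred to~\cite{diaz2021gram} ``for further detail in these properties''. There is therefore no in-paper argument to compare your attempt against.

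That said, your sketch is the natural one and matches what the cited work relies on: (i) translate $\mathqhv{X}<\mathqhv{Y}$ into the rank inequality $b_X<b_Y$ obtained from the lexicographic sort of $\mathcal{D}^i\cup I^i$, (ii) maintain by induction over the iterations the invariant that symbol order equals lex order of expansions in $T$, and (iii) use the LMS boundary structure from~\cite{n2009li} to upgrade the comparison of phrase expansions to a comparison of full suffixes of $T$. You also correctly flag the two places where the argument needs care: the bookkeeping that re-ranking $I^i$ preserves the invariant inside previously built rules, and the prefix case, where the special rule ``shorter prefix gets the larger rank'' must be reconciled with the L/S-type of the character following the LMS boundary. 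Since the paper outsources the proof entirely, your proposal is at least as detailed as what the paper offers.
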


\begin{lemma}\label{lem:rank}
Let $S=F[u..|F|]$ be a suffix in a right-hand side $F$ in $\mathcal{R}$. If $|S|>1$ and appears as prefix in some suffix of $T^{i}$, then we can use $S$ to get the lexicographical rank of that suffix among the other suffixes prefixed with sequences other than that of $S$.
\end{lemma}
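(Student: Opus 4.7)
The plan is to reduce the claimed rank computation to a comparison of $S$ against other grammatical prefixes, and then resolve that comparison using Lemma~\ref{lem:gram} together with the SuffixPair post-processing of $\mathsf{LMSg}$. The reduction itself is immediate: every suffix of $T^i$ that starts with $S$ occupies a contiguous block in the lexicographic order of all suffixes of $T^i$, so the rank of that block among suffixes prefixed by sequences different from $S$ is determined solely by the lexicographic position of $S$ among those other prefixes, without ever looking past $S$ inside the suffixes it heads.

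Next, I would argue that the information needed to rank $S$ is already encoded in the grammar. Because $S = F[u..|F|]$ is a suffix of a right-hand side in $\mathcal{R}$, the recursive SuffixPair step has given $S$ a canonical factorisation $\mathqhv{Z}_1\mathqhv{Z}_2\cdots\mathqhv{Z}_t$, shared by every right-hand side ending with $S$ (the degenerate case $t=1$ corresponds to $S$ being collapsed into a single nonterminal). Any other grammatical prefix $S'$ that appears as the start of some suffix of $T^i$ admits a similar factorisation, and the leftmost nonterminal position where the two factorisations disagree coincides with the leftmost symbolic mismatch between $S$ and $S'$. Applying Lemma~\ref{lem:gram} at that position settles the comparison: the mismatched nonterminals were introduced in the same iteration of $\mathsf{LMSg}$, and the lemma ensures that their numeric order agrees with the lexicographic order of the suffixes of $T$ whose compressed prefixes they are, which in turn agrees with the order between $S$ and $S'$.

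The main obstacle I anticipate lies in aligning the two factorisations at the boundary between their common tail and their differing parts: if SuffixPair had introduced a single nonterminal whose expansion straddled that boundary, a single nonterminal-level comparison would not suffice. I would rule this out by appealing to the deterministic, strictly right-to-left behaviour of SuffixPair, which forces any common tail of $S$ and $S'$ to be paired identically in both factorisations. Hence the first divergence is guaranteed to lie at a genuine symbolic mismatch between $S$ and $S'$ and to be witnessed by nonterminals created in the same $\mathsf{LMSg}$ iteration. Chaining these per-position comparisons across all grammatical prefixes other than $S$ that head some suffix of $T^i$ then yields the rank of $S$ among them, and therefore the rank of the block of suffixes of $T^i$ prefixed by $S$ within the full set of suffixes starting with a different sequence.
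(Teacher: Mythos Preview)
The paper does not prove Lemma~\ref{lem:rank} at all: it is stated together with Lemma~\ref{lem:gram} and Definition~\ref{def:strind}, followed immediately by ``For further detail in these properties, please see~\cite{diaz2021gram}.'' So there is no in-paper argument to compare your proposal against; the authors import the result from the companion paper.

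Judged on its own, your argument has a genuine gap in the step where you invoke Lemma~\ref{lem:gram} on the first differing symbol of the SuffixPair factorisations. Lemma~\ref{lem:gram} is stated only for nonterminals ``produced in the same iteration of $\mathsf{LMSg}$'' and speaks about suffixes of $T$ whose \emph{prefixes} are compressed as those nonterminals. SuffixPair symbols are created in a post-processing pass, not in any $\mathsf{LMSg}$ iteration, and they encode \emph{suffixes} of right-hand sides; nothing in the paper gives their numeric identifiers any order-preserving guarantee. Concretely, if $S=abc$ and $S'=abd$ and SuffixPair has produced $\mathqhv{Y}_1\to bc$ and $\mathqhv{Y}_2\to bd$, your factorisations first differ at $\mathqhv{Y}_1$ versus $\mathqhv{Y}_2$, yet you have no lemma asserting that $\mathqhv{Y}_1<\mathqhv{Y}_2$ iff the corresponding text suffixes are ordered that way. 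Your sentence ``the mismatched nonterminals were introduced in the same iteration of $\mathsf{LMSg}$'' is exactly the unjustified claim.

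A second missing ingredient is the case where one phrase suffix is a proper prefix of another. The $\mathsf{LMSg}$ description explicitly adopts the non-standard rule that when $F$ is a prefix of $F'$, ``the shortest one gets the greatest lexicographical rank''; this tie-break, inherited from induced suffix sorting, is precisely what makes Lemma~\ref{lem:rank} hold in the prefix case, and it is driven by the local LMS-type structure at the phrase boundary (the symbol right after the shorter phrase is S-type). Your proposal assumes a leftmost symbolic mismatch always exists and never engages with this situation. The mechanism actually underlying the lemma is the locality of the LMS classification together with that prefix rule, not the SuffixPair compression, which is purely a space-saving device in the encoding of~$\mathcal{P}$.
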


\begin{definition}\label{def:strind}
The grammar is string independent if the recursive expansion of every $T^{i}[j]$ spans at most one string $T_j \in \mathcal{T}$.
\end{definition}

For further detail in these properties, please see~\cite{diaz2021gram}.

\begin{figure}[h]
\centering
\includegraphics[width=\textwidth]{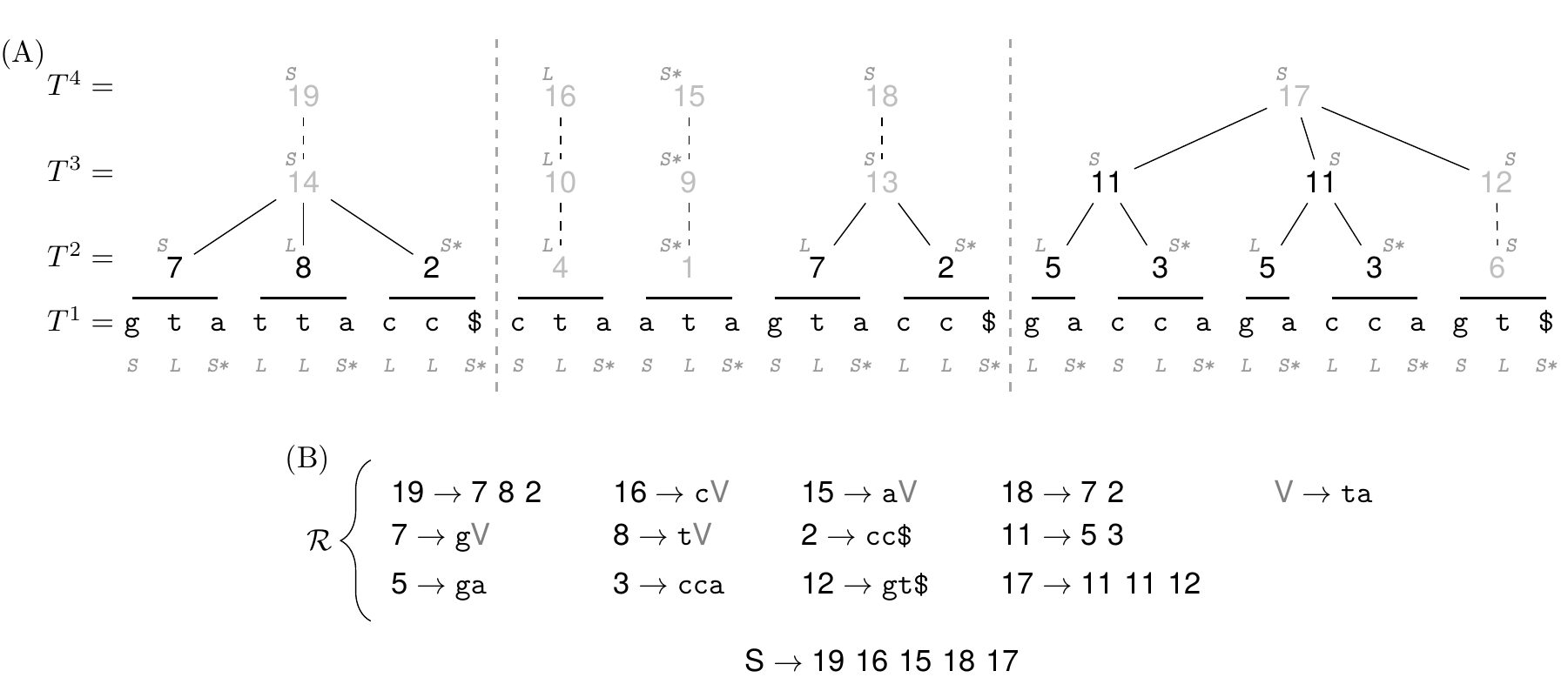}
\caption{(A) Running example of $\mathsf{LMSg}$. The symbols in gray below $T^{1}$ are character types (\emph{L-type}=L, \emph{S-type}=S, \emph{LMS-type}=S*). Dashed vertical lines mark the limits between the strings in $\mathcal{T}$. Every horizontal line on top of $T^{1}$ span one of the phrases generated in the iteration one of $\mathsf{LMSg}$. The parse tree of the grammar is depicted on top of $T^{1}$. Light gray nonterminals have frequency one in $T^{i}$. Dashed edges indicate symbols whose enclosing phrases were discarded for $\mathcal{D}^{i}$. The character at the top of every nonterminal denotes its suffix type. (B) Rules after postprocessing the grammar of (A). For clarity, the nonterminals were not collapsed. Dark gray characters are \emph{SuffPair} nonterminals. The character $\mathqhv{S}$ below $\mathcal{R}$ is the start symbol of $\mathcal{G}$.}

\label{fig:lmsg}
\end{figure}

\textbf{The grammar tree}. D\'iaz and Navarro use a slightly-modified version of the grammar tree data structure of~\cite{cn2012im} to encode $\mathcal{G}$. The construction algorithm is as follows; it starts by creating an empty tree $\mathcal{P}$. Then, it scans the parse tree of $\mathcal{G}$ in level-order. Every time the traversal reaches a new node $y$, the algorithm obtains its label $\mathqhv{X}$. If the extracted symbol is a nonterminal and is the first time it appears, then it creates a new internal node $u$ with $|F|$ children in $\mathcal{P}$, where $F$ is the replacement of $\mathqhv{X}$ in $\mathcal{R}$. The label $l$ of $u$ is the number of internal nodes in $\mathcal{P}$ so far plus $\sigma$, the number of terminals in $\mathcal{G}$. If, on the other hand, $y$ is not the first parse tree node labeled with $\mathqhv{X}$ we visit in the traversal, then the algorithm creates a leaf labeled with $l$ in $\mathcal{P}$ and discards the subtree of $y$. Finally, if $\mathqhv{X}$ is a terminal, then the algorithm creates a new leaf labeled with $\mathqhv{X}$. The shape of $\mathcal{P}$ is then stored using LOUDS, and the leaf labels are stored in a compressed array~\cite{sch964gen}. The internal node labels are not explicitly stored but retrieved on the fly by using the LOUDS operation $\mathsf{internalrank}$. The grammar tree for the grammar of Figure~\ref{fig:lmsg}B is shown in Figure \ref{fig:gt}. The figure also shows how to simulate in $\mathcal{P}$ a traversal over the parse tree of $\mathcal{G}$ by using the LOUDS operations.

\subsubsection{The GLex algorithm}

The grammar tree discards the original nonterminal values, but they are necessary to build the eBWT. D\'iaz and Navarro gave an algorithm called $\mathsf{GLex}$ that reconstructs those values directly from the shape of $\mathcal{P}$. The procedure yields a set of $h$ triplets, where $h$ is the number of iterations of $\mathsf{LMSg}$. Each iteration $i$ has its triplet $(L^i, R^i, f^i)$. The set $L^i$ contains the labels of $\mathcal{P}$ for the phrases in $\mathcal{D}^i \cup I^i$, the set $R^i$ has the lexicographical ranks of those phrases and $f^i$ is a function $f^i(l)=b$ that maps a label $l \in L^{i}$ to its rank $b \in R^i$. For a detailed explanation of the method, we refer the reader to~\cite{diaz2021gram}. 

For simplicity, we consider $R^{i-1}$ to be the alphabet of $T^{i}$ during the construction of the eBWT, altough this is not strictly true. Every nonterminal $\mathqhv{X}=r'+b \in V$ in $T^{i}$ is the sum of $r'$; the highest nonterminal before iteration $i$, and $b$; the lexicographical rank of the phrase $F \in \mathcal{D}^{i-1} \cup I^{i-1}$ to which $\mathqhv{X}$ is assigned. In other words, $\mathsf{GLex}$ only recovers the $b$ part of $\mathqhv{X}$. Still, replacing the nonterimnals of $T^{i}$ with their ranks in $R^{i-1}$ makes no difference for inferring the eBWT as we will see in later sections.

\begin{figure}[t]
\centering
\includegraphics[width=\textwidth]{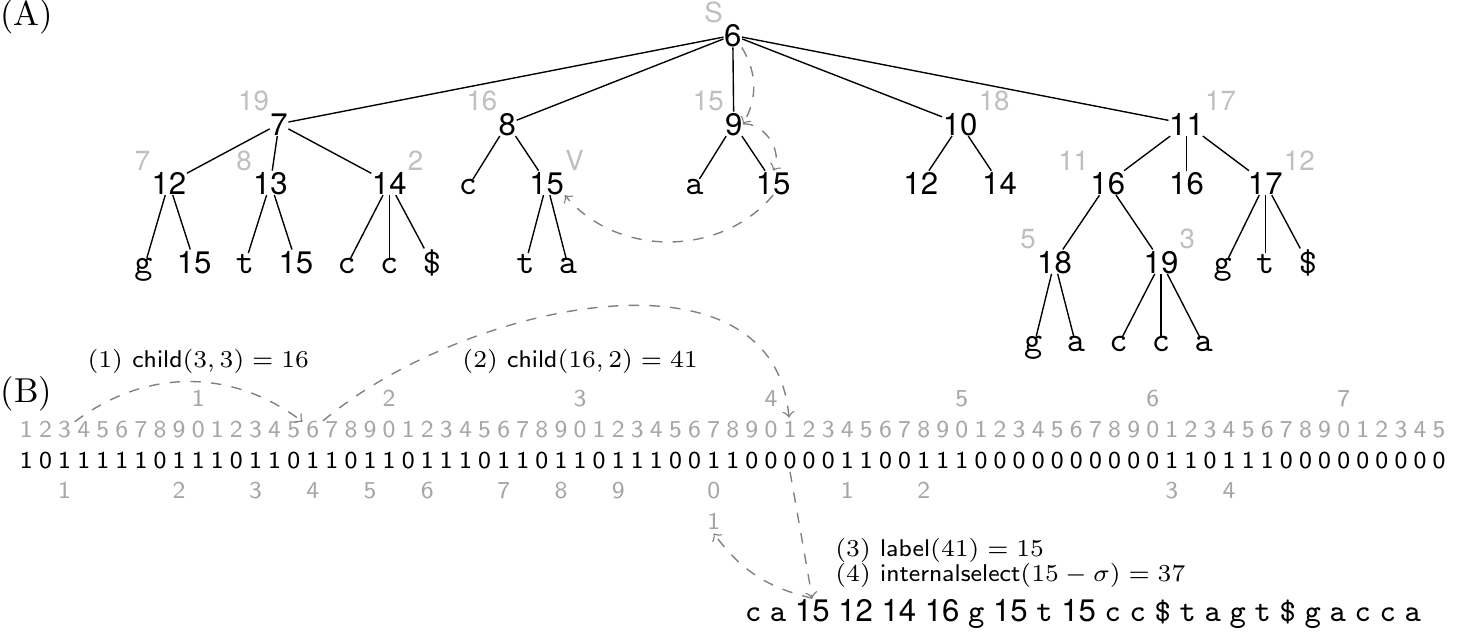}
\caption{ (A) The grammar tree of Figure~\ref{fig:lmsg}B. Numbers on top of the internal nodes are the original nonterminals of the grammar. Dashed arrows simulate a traversal over the parse tree of $\mathcal{G}$ to decompress the word $\texttt{ta}$ from $T^{1}[14..15]$ (Figure~\ref{fig:lmsg}). (B) LOUDS encoding for (A). The bitstream stores the shape of the tree. Gray numbers on top are the bit indexes. Gray numbers below the stream are the internal ranks of the nodes. The integer vector below the stream contains the leaf labels. Dashed arrows mark the same decompression path as in (A), but using the LOUDS functions.}
\label{fig:gt}
\end{figure}

\section{Methods}

Our algorithm for computing the eBWT of $T$ is called $\mathsf{infBWT}$. It first produces the eBWT $B^{h}$ of $C=T^{h}$. Then, it iterates from $h$ to $2$ to infer the eBWT $B^{i-1}$ of $T^{i-1}$ from the already computed transform $B^{i}$. When the iterations finish, $\mathsf{infBWT}$ returns $B^{1}$ as the eBWT of $T$. The overview of the procedure is depicted in Algorithm \ref{algo:infBWT}. Line~\ref{algo:infBWT:inv} is explained later.

\begin{algorithm}[htp]
\caption{Overview of $\mathsf{infBWT}$}
\label{algo:infBWT}{}
\begin{algorithmic}[1]
\Procedure{$\mathsf{infBWT}$}{$\mathcal{P}$} \Comment{returns the eBWT of $T$}
    \State $\mathsf{GLex}(\mathcal{P})$ \Comment{produces the $h$ triplets $(L^{i}, R^{i}, f^{i})$}
    \State Load triplet $h$ from disk
    \State Compute the eBWT $B^{h}$ of $C$ from triplet $h$ and $\mathcal{P}$
    \For{$i=h$ to $2$}
        \State Replace $f^{i}$ with $f^{i}_{inv}$ in triplet $i$\label{algo:infBWT:inv}
        \State Load triplet $i-1$ from disk
        \State $B^{i-1} \gets \mathsf{nextBWT}(B^{i}, f^{i}_{inv}, L^{i-1}, R^{i-1}, f^{i-1})$\label{algo:infBWT:inf}
        \State Discard $B^{i}$ and triplet $i$
        \State $i \gets i-1$
    \EndFor
    \State\textbf{return} $B^{1}$
\EndProcedure
\end{algorithmic}
\end{algorithm}

\subsection{Computing the eBWT of the compressed text}

Unlike the regular BWT, the position of each $C[j]$ in the eBWT does not depend on the whole suffix $C[j+1..c]$, but on the string $S=C[j+1..j+p']C[j-p..j]$. This sequence is a circular permutation of the compressed string $T_{x} \in \mathcal{T}$ encoded in the range $C[j-p..j+p']$. Computing $S$ from the grammar tree $\mathcal{P}$ is simple as $\mathcal{G}$ is string independent (see Definition~\ref{def:strind}). This feature implies that every $T_x \in \mathcal{T}$ maps a specific range $C[k..k']$, with $k\leq k'$. Therefore, we do not have to deal with border cases. For instance, when the compressed suffix or prefix of $T_x$ lies in between two consecutive symbols of $C$.

For constructing the eBWT of $C$ we require $\mathcal{P}$ and $(L^{h}, R^{h}, f^{h})$, the last triplet produced by $\mathsf{GLex}$. Given the definition of $\mathcal{P}$, we can easily obtain the root child $v$ encoding $C[j]$ as $v=\mathsf{nodeselect}(j+1)$. Once we retrieve $v$, we obtain $C[j]$ with $f^h(\mathsf{label}(v))$. For accessing the circular string $S$ to the right of $C[j]$ we define the function $\mathsf{cright}$. This procedure receives as input a position $j \in [1..c]$ and returns another position $j' \in [1..c]$ such that $C[j']$ is the circular right context of $C[j]$. We use $\mathsf{cright}$ as the underlying operator for another function, $\mathsf{ccomp}$. This method compares lexicographically two circular permutations located at different positions of $C$. Similarly, we define a function $\mathsf{cleft}$ that returns the circular left context of $C[j]$. We use it to get the eBWT symbols once we sort the circular permutations. To support these operations, we consider the border cases $C[k'+1]=C[k]$ and $C[k-1]=C[k']$ for every $T_x$. These exceptions require us to include a bitmap $U[1..c]$ that marks as $U[j]=1$ every root child in $\mathcal{P}$ whose recursive expansion is suffixed by a dummy symbol. The functions $\mathsf{cleft},\mathsf{cright}$ and $\mathsf{ccomp}$ are described in Algorithm \ref{algo:cir}.

We start the computation of the eBWT of $C$ by creating a table $A[1..c]$ with $|R^{h}|$ lexicographical buckets. Then, we scan the children of the root of $\mathcal{P}$ from left to right, and for every node $v$,  we store its child rank in the leftmost available cell of bucket $f^{h}(\mathsf{label}(v))$. This process yields a partial sorting of circular permutations of $C$; every bucket $b$ contains the strings that start with symbol $b$. To finish the sorting, we apply a local $\mathsf{quicksort}$ in every bucket, using $\mathsf{ccomp}$ as the comparison function. Notice these calls to $\mathsf{quicksort}$ should be fast as most of the buckets have few elements, and the amount of right contexts we have to access in every comparison is small. Finally, we produce $B^{h}$ by scanning $A$ from left to right and pushing every symbol $f^{h}(\mathsf{label}(\mathsf{nodeselect}(\mathsf{cleft}(A[j])+1)))$ with $j \in [1..|A|]$.

\begin{algorithm}[!ht]
\caption{Functions to simulate circularity over $C$}
\label{algo:cir}{}
\begin{algorithmic}[1]
\Require A bitmap $U[1..|C|]$ marking the symbols of $C$ expanding to phrases suffixed by $\texttt{\$}$.
\Procedure{$\mathsf{cright}$}{$j$} \Comment{returns a $j'$ such that $C[j']$ is the circular right context of $C[j]$}
    \If{$U[j]$}
        \State $j \gets j-1$
        \While{$U[j]$ \textbf{is} $\mathsf{false}$} 
            \State $j \gets j-1$
        \EndWhile
    \EndIf
    \State\textbf{return} $j+1$
\EndProcedure
\Statex
\Procedure{$\mathsf{cleft}$}{$j$} \Comment{returns a $j'$ such that $C[j']$ is the circular left context of $C[j]$}
    \If{$U[j-1]$}
        \While{$U[j]$ \textbf{is} $\mathsf{false}$} 
            \State $j \gets j+1$
        \EndWhile
        \State\textbf{return} $j$
    \Else
        \State\textbf{return} $j-1$
    \EndIf
\EndProcedure
\Statex
\Procedure{$\mathsf{ccomp}$}{$a$,$b$} \Comment{circular lexicographical comparison of $C[a]$ and $C[b]$}
    \State $r_1 \gets f^{h}(\mathsf{label}(\mathsf{nodeselect}(a+1)))$
    \State $r_2 \gets f^{h}(\mathsf{label}(\mathsf{nodeselect}(b+1)))$
    \While{$r_1 \neq r_2$} 
        \State $a \gets \mathsf{cright}(a)$, $b \gets \mathsf{cright}(b)$
        \State $r_1 \gets f^{h}(\mathsf{label}(\mathsf{nodeselect}(a+1)))$
        \State $r_2 \gets f^{h}(\mathsf{label}(\mathsf{nodeselect}(b+1)))$
    \EndWhile
    \State\textbf{return} $r_1<r_2$
\EndProcedure
\end{algorithmic}
\end{algorithm}

\subsection{Inferring the eBWT of the reads}

We define a method called $\mathsf{nextBWT}$ for inferring $B^{i-1}$ from $B^{i}$ (Line~\ref{algo:infBWT:inf} of Algorithm \ref{algo:infBWT}). This procedure requires us to have a mechanism to map a symbol $B^{i}[j]$ to its phrase $F \in \mathcal{D}^{i-1} \cup I^{i-1}$. We support this feature by replacing $f^{i}$ with an inverted function $f^{i}_{inv}$ that receives a rank $b \in R^{i}$ and returns its associated $\mathcal{P}$ label $l \in L^{i}$ (Line~\ref{algo:infBWT:inv} of Algorithm~\ref{algo:infBWT}). Thus, we obtain the grammar tree node $v$ that encodes $F$ with $\textsf{internalselect}(l-\sigma)$. To spell the sequence of $F$ we proceed as follows; we use $\mathcal{P}$ to simulate a pre-order traversal over the subtree of $v$ in the parse tree of $\mathcal{G}$. When we visit a new node $v'$, we check if its label $l'=\mathsf{label}(v')$ belongs to $L^{i-1}$. If that so, then we push $f^{i-1}(l')$ to $F$ and skip the parse subtree under $v'$. The process stop when we reach $v$ again. We call this process the level-decompression of $F$, or $\mathsf{ldc}(v)=F$. 

If we level-decompress all the phrases encoded in $B^{i}$, then we obtain all the symbols of $T^{i-1}$, although not sorted according the eBWT's definition. The position of each decompressed symbol $F[u] \in R^{i-1}$ in $B^{i-1}$ depends, in most of the cases, only on the suffix $F[u+1..|F|]$, except when this suffix has length less than two (Lemma~\ref{lem:rank}). In addition, if two symbols $F[u]$ and $F'[u']$, level-decompressed from different positions $B^{i}[j]$ and $B^{i}[j']$ (respectively), are followed by the same suffix in their respective phrases $F$ and $F'$, then their relative orders in $B^{i-1}$ only depend on the values of $j$ and $j'$. This property is formally defined as follows:

\begin{lemma}\label{lem:bwt1}
Let $B^{i}[j]$ and $B^{i}[j']$ be two eBWT symbols at different positions $j$ and $j'$, with $j<j'$, and whose level-decompressed phrases are $F$ and $F'$, respectively. Also let $S_j$ and $S_{j'}$ be suffixes of $F$ and $F'$ with the same sequence $S$. The occurrence $S_j$ is lexicographically smaller than $S_{j'}$ as it level-decompressed first in $B^{i}$.
\end{lemma}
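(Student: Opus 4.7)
The plan is to reduce the comparison of $S_j$ and $S_{j'}$ in $T^{i-1}$ to the comparison of two circular suffixes of $T^i$ whose relative order is already known from their positions in $B^i$. First I fix notation: let $T^i[q]$ be the occurrence of the nonterminal of $T^i$ recorded at $B^i[j]$, so that $T^i[q+1..]$ is the $j$-th lex-smallest circular suffix of $T^i$; define $T^i[q']$ analogously for $B^i[j']$. By string independence (Definition~\ref{def:strind}), the expansion of $T^i[q]$ lies entirely inside a single read $T_x^{i-1}$; call $[p..p+|F|-1]$ the range of $T^{i-1}$ it covers. Then $S_j$ starts at $T_x^{i-1}[p+u-1]$, where $u$ is the index with $S=F[u..|F|]$, and analogously define $p'$, $u'$ and $T_{x'}^{i-1}$ for $S_{j'}$.

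Next I unfold the circular rotation of $T_x^{i-1}$ that begins at $S_j$. Stepping through the phrases that follow $F$ in $T_x^{i-1}$ (wrapping around to the prefix of $F$ that precedes $S$), this rotation equals $S$ concatenated with the expansion of $T_x^i[q+1..]$ read circularly in $T_x^i$, followed by the expansion of $T_x^i[1..q-1]$ and finally $F[1..u-1]$. Equivalently, the full rotation of $T_x^{i-1}$ at $S_j$ is the expansion into $T^{i-1}$'s alphabet of the circular rotation of $T_x^i$ at position $q+1$. The same identity holds with primes for $S_{j'}$. Since $j<j'$, the definition of the eBWT of $T^i$ immediately gives that the rotation of $T_x^i$ at $q+1$ is strictly lex-smaller than the rotation of $T_{x'}^i$ at $q'+1$.

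The final and hardest step is to show that passing to the expansion preserves this lexicographic order at level $i-1$. I would apply Lemma~\ref{lem:gram} at the first position where the two $T^i$-rotations disagree: there sit two distinct nonterminals $\mathqhv{X}<\mathqhv{Y}$ of $T^i$, and the lemma says that any $T^{i-1}$-suffix whose prefix expands from $\mathqhv{X}$ is lex-smaller than any whose prefix expands from $\mathqhv{Y}$, which is precisely the comparison of the two expansions beyond their shared prefix. Combined with the common prefix $S$, this yields that $S_j$'s rotation in $T^{i-1}$ is lex-smaller than $S_{j'}$'s, so $S_j$ precedes $S_{j'}$ in $B^{i-1}$. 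The subtlety to handle here is that nonterminals appearing in $T^i$ may have been introduced at different iterations of $\mathsf{LMSg}$; however, the re-ranking performed at iteration $i-1$ assigns them consistent ranks within $R^{i-1}$, so a single invocation of Lemma~\ref{lem:gram} at level $i-1$ suffices.
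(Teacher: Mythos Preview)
Your argument is correct and follows essentially the same route as the paper: since $S_j=S_{j'}$, the comparison is decided by the (circular) right contexts of $F$ and $F'$ in $T^{i-1}$, and these contexts are ordered exactly by the positions $j<j'$ in $B^{i}$. The paper's three-sentence proof states this directly without further justification; your version is a more detailed unpacking of the same idea, in particular making explicit, via Lemma~\ref{lem:gram}, why the order of the $T^{i}$ rotations at $q+1$ and $q'+1$ survives the expansion to $T^{i-1}$ --- a step the paper leaves implicit. One minor wording issue: the sentence beginning ``Equivalently, the full rotation\ldots'' is not literally an equivalence (the $T^{i-1}$ rotation at $S_j$ starts with $S$, not at a phrase boundary), but your preceding sentence already has the correct decomposition, and the rest of the argument only uses that the portion after the common prefix $S$ is the expansion of the $T^{i}$ rotation at $q+1$.
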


\begin{proof}
As $S_j$ and $S_{j'}$ are equal, their relative orders depend on the lexicographical ranks of the phrases to the (circular) right of $F$ and $F'$ in the partition of $T^{i-1}$. As $B^{i}[j]$ appears before (from left to right) than $B^{i}[j']$, its right context is lexicographically smaller than the right context of $B^{i}[j']$. Therefore, $S_{j}$ is also lexicographically smaller than $S_{j'}$.
\end{proof}

If we generalize Lemma~\ref{lem:bwt1} to $x\geq1$ occurrences of $S$, then we can use the following Lemma for building $B^{i-1}$:

\begin{lemma}\label{lem:bwt2}
Let $S$ be a string of length at least two, and with $x$ occurrences in $T^{i-1}$. Let $J={j_1, j_2, ...,j_x}$ be the positions of $B^{i}$ encoding the phrases where those occurrences appear as suffixes. Assume we scan $J$ from left to right and for every suffix occurrence of $S$, we extract its left symbol and push it to a list $O$. The resulting list will match a consecutive range $B^{i-1}[o..o']$ of length $o'-o+1=x$. 
\end{lemma}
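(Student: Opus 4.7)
My plan is to combine Lemma~\ref{lem:rank} with a generalisation of Lemma~\ref{lem:bwt1}: the former locates the range, and the latter pins down its internal order.

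First, I would establish that the $x$ characters collected in $O$ correspond to a contiguous range of $B^{i-1}$. Each occurrence of $S$ as a suffix of a phrase at level $i-1$ is also a prefix of a (circular) suffix of $T^{i-1}$. Since $|S|\geq 2$, Lemma~\ref{lem:rank} tells us that $S$ alone suffices to rank these suffixes against all suffixes of $T^{i-1}$ that are not prefixed by $S$. Consequently, the $x$ eBWT cells the lemma is asking about must form a single consecutive range $B^{i-1}[o..o']$ with $o'-o+1=x$, and only the internal order inside this range remains to be identified.

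Next, I would argue that scanning $J$ from left to right visits these occurrences in exactly the order they appear in $B^{i-1}[o..o']$. The natural approach is a pairwise argument: for any two indices $j_a<j_b$ in $J$, Lemma~\ref{lem:bwt1} applied to these two occurrences of $S$ directly yields that the occurrence at $j_a$ is lexicographically smaller than the one at $j_b$ in the sort underlying $B^{i-1}$. Because a total order is determined by its pairwise comparisons, running through $J$ left to right produces exactly the sequence in which these $x$ occurrences appear in $B^{i-1}[o..o']$. The character pushed to $O$ at each step is, by construction, the symbol of $T^{i-1}$ preceding the suffix $S$, which is precisely the eBWT symbol at the corresponding position; hence $O$ matches $B^{i-1}[o..o']$ cell by cell.

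The step I expect to require the most care is the boundary case in which $S$ coincides with the entire phrase $F$, so that the extracted left symbol is not located inside $F$ but in the phrase preceding $F$ in the partition of $T^{i-1}$, or, circularly, at the end of the same read. The string-independence of the grammar (Definition~\ref{def:strind}) keeps this preceding symbol well-defined and aligned with the circular view used by the eBWT, so Lemma~\ref{lem:bwt1} remains applicable; but verifying this compatibility, and making sure the ``push the left symbol'' prescription is interpreted consistently across the two levels, is the point where I would be most deliberate.
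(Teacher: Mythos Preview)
Your proposal is correct and aligns with the paper's own justification. The paper does not give a formal proof of Lemma~\ref{lem:bwt2}; it merely remarks, immediately before stating the lemma, that it follows by generalising Lemma~\ref{lem:bwt1} to $x\geq 1$ occurrences of $S$. Your argument does exactly this pairwise generalisation for the internal order, and additionally makes explicit the appeal to Lemma~\ref{lem:rank} for the contiguity of the range---a point the paper leaves implicit. Your discussion of the boundary case where $S$ equals the whole phrase is also more careful than anything the paper spells out here (the paper handles that situation operationally later, via $\mathsf{LF}$ on the FM-index of $B^{i}$).
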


Finding the range of $O$ in $B^{i-1}$ requires to sort the distinct suffixes in $\mathcal{D}^{i-1} \cup I^{i-1}$ in lexicographical order. Thus, if $S$ has rank $b$, then $O$ is the \emph{b-th} range of $B^{i-1}$. We refer to these suffixes as the \emph{context strings} of the symbols in $B^{i-1}$. The problem, however, is that the suffixes in $\mathcal{D}^{i-1}$ of length one are ambiguous as we cannot assign them a lexicographical order (Lemma~\ref{lem:rank}). We solve this problem by extending the occurrences of the ambiguous context strings with the phrases in $\mathcal{D}^{i-1} \cup I^{i-1}$ that occur to their (circular) right in $T^{i-1}$. By doing this extension, we induce a partition over the $O$ list of an ambiguous $S$; the symbols in first range $O_{X}=O[z..z']$ precede the occurrences of the extended phrase $SX$, the symbols in the second range $O_{Y}=O[z'+1..z'']$ precede the occurrences of another phrase $SY$, and so on. In this way, every segment of $O$ can be placed in some contiguous range of $B^{i-1}$.

We build an FM-index from $B^{i}$ to compute the string extensions, but without including the SA. The idea is that when we extract the occurrence of an ambiguous $S$ from $B^{i}[j]$, we use $\mathsf{LF}^{-1}(j)=j'$ to get the position in $B^{i}$ with the symbol that occurs to the right of $B^{i}[j]$ in $T^{i}$. In this way, we concatenate $S$ to the phrase $F$ obtained from level-decompressing the symbol $B^{i}[j']$. Equivalently, we use $\mathsf{LF}$ to find the left symbols of the context strings that are not proper suffixes of their enclosing phrases. 

When we sort the distinct context strings, we reference them by using nodes of $\mathcal{P}$. If $S$ has length two an appears as suffix in different right-hand sides of $\mathcal{R}$, then there is a \emph{SuffPair} internal node in $\mathcal{P}$ for it (nodes $y2,y3$ and $y4$ of Figure~\ref{fig:infbwt}). Additionally, if $S$ appears as suffix only in one right-hand side, then there is a node (leaf or internal) that we can use as identifier (node $y1$ of Figure~\ref{fig:infbwt}). Finally, if $S$ is not a proper suffix in the right-hand side of the rule, then we use the internal node in $\mathcal{P}$ for that rule (node $v$ of Figure~\ref{fig:infbwt}).

We begin $\mathsf{nextBWT}$ by initializing two empty semi-external lists, $Q$ and $Q'$. Then, we scan $B^{i}$ from left to right. For every $B^{i}[j]$, we first perform $\mathsf{LF}(j)=j'$ to find the position in $B^{i}$ with the symbol that occur to the left of $B^{i}[j]$ in $T^{i}$. Then, we use the functions $f^{i}_{inv}$ to obtain the internal nodes $u$ and $v$ in $\mathcal{P}$ that encode the phrases of $B^{i}[j']$ and $B^{i}[j]$, respectively. Subsequently, we level-decompress the phrase $W \in \mathcal{D}^{i-1} \cup I^{i-1}$ of $u$ and push the pair $(b, v)$ to $Q$, where $b \in R^{i-1}$ is the last symbol of $W$. The next step is to level-decompress the phrase of $v$. During the process, when we reach a node whose label is in $L^{i-1}$, we get its associated symbol $b \in R^{i-1}$ and its right sibling $y$. The next action depends on the value of $l_{y}=\mathsf{label}(y)$. If $l_y$ belongs to $L^{i-1}$ and $y$ is not the rightmost child of its parent, then we push the pair $(b, y)$ to $Q$. If, on the other hand, $y$ is the rightmost child of its parent, but $l_{y}$ does not belong to $L^{i-1}$, then we update its value to $y=\mathsf{internalselect}(l_{y}-\sigma)$ and then push $(b, y)$ into $Q$. The purpose of the $\mathsf{internalselect}$ operation is to get the internal node with the first occurrence of $l_{y}$ in $\mathcal{P}$. The last option is that $l_{y}$ belongs to $L^{i-1}$ and $y$ is the rightmost child of its parent. In such situation, we extend the context of $y$. We use $\mathsf{LF}^{-1}(j)=j''$ to get the position with the symbol to the right of $B^{i}[j]$. As before, we get the internal node $z$ associated to $B^{i}[j'']$ and finally push the triplet $(b, y, z)$ to $Q'$.

Once we complete the scan of $B^{i}$, we group the pairs in $Q$ according to $y$, without changing their relative orders. In $Q'$ we do the same, but we sort the elements according to $(y,z)$. Subsequently, we form a sorted set $U$ with the distinct $y$ symbols of $Q$ plus the distinct pairs $(y,z)$ of $Q'$. To get the relative order of two elements in $U$, we level-decompress their associated phrases and compare them in lexicographical order. For the pairs $(y,z)$, their phrases are obtained by concatenating the level-decompression of $y$ and $z$. Finally, if a given value $y \in Q$  has rank $b$ among the other elements of $U$, then we place the left symbols of its range in $Q$ at the \emph{b-th} range of $B^{i-1}$. The same idea applies for the pairs $(y,z)$ of $Q'$. Algorithm~\ref{algo:inf} implements $\mathsf{nextBWT}$ and Figure~\ref{fig:infbwt} depicts our method for processing $B^{i}[j]$. 

\begin{algorithm}[!ht]
\caption{Inferring $B^{i-1}$ from $B^{i}$}
\label{algo:inf}{}
\begin{algorithmic}[1]
\Require $\mathcal{P}$
\Procedure{$\mathsf{nextBWT}$}{$j$, $f^{i}_{inv}, L^{i-1}, R^{i-1}, f^{i-1}$} \Comment{Produces $B^{i-1}$ from $B^{i}$}
    \State $Q \gets Q' \gets \emptyset$
    \For{$j=1$ to $|B^{i}|$}\label{algo:inf:spar} 
        \State $j' \gets \mathsf{LF}(j)$ \Comment{left symbol of $B^{i}[j]$ in $T^{i}$}
        \State $v \gets \mathsf{internalselect}(f^{i}_{inv}(B^{i}[j])-\sigma)$
        \State $u \gets \mathsf{internalselect}(f^{i}_{inv}(B^{i}[j'])-\sigma)$
        \State $W \gets \mathsf{ldc}(u)$\label{algo:infbwt:lf}
        \State push pair $(W[|W|]], v)$ to $Q$ 
        \If{$\mathsf{label}(v) \notin L^{i-1}$}\Comment{$v$ decompresses to a phrase in $\mathcal{D}^{i-1}$}
            \State $b \gets f^{i-1}(\mathsf{label}(\mathsf{child}(v,1)))$
            \State $y \gets \mathsf{child}(v,2)$
            \While{$\mathsf{true}$}
                \If{$\mathsf{nsib}(y) \neq 0$} \Comment{$y$ is not the rightmost child of its parent}
                    \State push pair $(b, y)$ to $Q$ 
                    \State $b \gets f^{i-1}(\mathsf{label}(y))$
                    \State $y \gets \mathsf{nsibling}(y)$
                \Else
                    \If{$\mathsf{label(y)} \notin L^{i-1}$}\Comment{\emph{SuffPair} node}
                        \State $y \gets \mathsf{internalselect}(\mathsf{label}(y)-\sigma)$
                        \State push pair $(b, y)$ to $Q$ 
                        \State $b \gets f^{i-1}(\mathsf{label}(\mathsf{child}(y,1)))$
                        \State $y \gets \mathsf{child}(y,2)$
                    \Else
                        \State $j'' \gets \mathsf{LF}^{-1}(j)$\Comment{right symbol of $B^{i}[j]$ in $T^{i}$}
                        \State $z \gets \mathsf{internalselect}(f^{i}_{inv}(B^{i}[j''])-\sigma)$
                        \State push triplet $(b, y, z)$ to $Q'$ 
                        \State \textbf{break} 
                    \EndIf
                \EndIf
            \EndWhile
        \EndIf
    \EndFor\label{algo:inf:epar}
    \State $U \gets$ distinct $y$ values of $Q$ $\cup$ distinct $(y,z)$ pairs of $Q'$ 
    \State Sort the strings encoded in $U$\label{algo:inf:sort}
    \State Reorder the elements of $Q \cup Q'$ according the ranks of $U$
    \State $B^{i-1} \gets $ left symbols of $Q \cup Q'$
    \State \textbf{return} FM-index of $B^{i-1}$ 
\EndProcedure
\end{algorithmic}
\end{algorithm}

\subsection{Implicit occurrences of the context phrases}

The $\mathsf{nextBWT}$ algorithm works provided all the occurrences of a context string $S$ appear as suffixes in the phrases of $\mathcal{D}^{i-1} \cup I^{i-1}$. Still, this is not always the case as sometimes they occur in between nonterminals.

\begin{definition}\label{def:impocc}
Let $F=\mathqhv{XYZ}$ be a string in $\mathcal{D}^{i-1}$ with an occurrence in $T^{i-1}[j..j+2]$. This substring is said to be an implicit occurrence of $F$ if, during the parsing of $\mathsf{LMSg}$, $T^{i-1}[j]=\mathqhv{X}$ becomes a suffix of the phrase at $T^{i}[j-p..j]$, with $p\geq1$, and $T^{i}[j+1..j+2]=\mathqhv{YZ}$ is considered another phrase $A \in \mathcal{D}^{i-1}$. 
\end{definition}

An implicit occurrence appears when $F[1]$ is classified as \emph{LMS-type} during the parsing. We use the following lemma to detect this situation:

\begin{lemma}\label{lem:impocc}
The node $v$ encoding $F$ in $\mathcal{P}$ has two children, the left one has a label in $L^{i-1}$ and the right one is a SuffPair node whose label is in $L^{i}$.
\end{lemma}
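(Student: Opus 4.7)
The plan is to argue first about the structural shape of $v$ in $\mathcal{P}$, and then to place the labels of its two children into the correct $L$-buckets.

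For the structural part, I use that both $F \to \mathqhv{XYZ}$ and $A \to \mathqhv{YZ}$ are rules of $\mathcal{R}$ (by the implicit-occurrence hypothesis, $A \in \mathcal{D}^{i-1}$ and $F \in \mathcal{D}^{i-1}$). Hence $\mathqhv{YZ}$ occurs as a size-two suffix in two distinct right-hand sides: as the trailing two symbols of $F$'s body, and as the whole body of $A$, which is trivially a suffix of itself. The SuffPair post-processing of $\mathsf{LMSg}$ is triggered by exactly this situation: it creates a new SuffPair nonterminal $\mathqhv{P}$ with rule $\mathqhv{P} \to \mathqhv{YZ}$ and rewrites $F$'s body to $\mathqhv{X}\,\mathqhv{P}$. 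Feeding the rewritten rule into the grammar-tree construction gives $v$ exactly two children, the right one being $\mathqhv{P}$'s $\mathcal{P}$-node, hence a SuffPair node as required.

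For the labels, I appeal to how $\mathsf{GLex}$ groups $\mathcal{P}$-nodes into the triplets $(L^i, R^i, f^i)$ by a level-wise scan of $\mathcal{P}$. The left child represents $\mathqhv{X}$, which is a symbol of $T^{i-1}$ and therefore sits at the same parse-tree level as every other iteration-$(i-1)$ alphabet symbol; $\mathsf{GLex}$ consequently places its label into $L^{i-1}$. The right child is $\mathqhv{P}$: the SuffPair insertion has pushed $\mathqhv{Y}$ and $\mathqhv{Z}$ exactly one parse-tree level deeper than they would be in the SuffPair-free grammar, and $\mathqhv{P}$'s own internal node now occupies the vacated level. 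In $\mathsf{GLex}$'s level-wise reconstruction this extra level is interpreted as belonging to iteration $i$, so the right child's label ends up in $L^{i}$.

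The main obstacle I anticipate is the second paragraph: the naive bijection between parse-tree levels of $\mathcal{P}$ and $\mathsf{LMSg}$ iterations breaks whenever SuffPair nodes appear, so I would need to invoke the construction of $\mathsf{GLex}$ from~\cite{diaz2021gram} to certify the invariant that any SuffPair internal node whose two children expand to iteration-$(i-1)$ symbols is consistently assigned to $L^{i}$, while the remaining ordinary children of an iteration-$(i-1)$ node (such as $\mathqhv{X}$ here) stay in $L^{i-1}$. Once that invariant is in hand, the lemma follows immediately from the structural observation of the first paragraph together with the level-based placement of the two labels.
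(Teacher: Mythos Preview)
Your structural argument is on the right track, but you miss the one observation the paper's proof hinges on, and as a result you are left with exactly the obstacle you flag in your last paragraph. You posit that the SuffPair post-processing \emph{creates a new} nonterminal $\mathqhv{P}$ with rule $\mathqhv{P}\to\mathqhv{YZ}$. But a rule with body $\mathqhv{YZ}$ already exists: it is $\mathqhv{A}\to\mathqhv{YZ}$, since by the implicit-occurrence hypothesis $A\in\mathcal{D}^{i-1}$. The paper's point is precisely that $\mathsf{LMSg}$ therefore \emph{reuses} $\mathqhv{A}$ as the SuffPair nonterminal and rewrites $\mathqhv{F}$'s rule to $\mathqhv{F}\to\mathqhv{X}\mathqhv{A}$. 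Consequently the right child of $v$ is the $\mathcal{P}$-node of $\mathqhv{A}$ itself, and its label lands in the claimed $L$-set not because of any generic level-assignment rule for SuffPair nodes in $\mathsf{GLex}$, but simply because $\mathqhv{A}$ is one of the phrase nonterminals of that iteration, so its $\mathcal{P}$-label belongs to that $L$-set by definition.

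In other words, your anticipated difficulty --- having to certify an invariant about how $\mathsf{GLex}$ assigns levels to freshly created SuffPair internal nodes --- is an artefact of the wrong premise that a fresh $\mathqhv{P}$ is introduced. Once you replace ``creates a new $\mathqhv{P}$'' by ``reuses the existing $\mathqhv{A}$'', the label claim for the right child becomes immediate and the whole argument collapses to the paper's two sentences; no appeal to the internals of $\mathsf{GLex}$ from~\cite{diaz2021gram} is needed.
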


\begin{proof}
Assume the new rules for $F$ and $A$ after partitioning $T^{i-1}$ are $\mathqhv{F} \rightarrow \mathqhv{XYZ}$ and $\mathqhv{A} \rightarrow \mathqhv{YZ}$, respectively. As $\mathqhv{YZ}$ is a repeated suffix, $\mathsf{LMSg}$ has to create a \emph{SuffPair} nonterminal for it, but it already exists, it is $\mathqhv{A}$. Thus, $\mathsf{LMSg}$ reuses it and replaces $\mathqhv{YZ}$ with $\mathqhv{A}$ in $\mathqhv{F}$'s rule. 
\end{proof}

Before running $\mathsf{nextBWT}$, we scan $L^{i}$ from left to right to find the internal nodes of $\mathcal{P}$ that meet Lemma~\ref{lem:impocc}. For every node $v'$ that meets the lemma, we create a pair $p=(l_{l}, l_{r})$, with the labels in $\mathcal{P}$ of its left and right children, respectively. Then, we record $p$ in a hash table $\mathcal{H}$, with $v'$ as its value. During the execution of $\mathsf{nextBWT}$, when we level-decompress the symbol to the left of $B^{i}[j]$ (Line~\ref{algo:infbwt:lf} of Algorithm~\ref{algo:inf}), we check if the pair formed by the grammar tree label of $W[|W|]$ and the label $f^{i}_{inv}(B^{i}[j])$ has an associated value $v'$ in $\mathcal{H}$. If that happens, then we insert $(W[|W|-1],v')$ to $Q$.

\subsection{Improving the algorithm}

\textbf{Run-length decompression.} As we descend over the iterations of $\mathsf{infBWT}$, the size of $B^{i}$ increases, but its alphabet decreases. This means that in the last steps of the algorithm we end up decompressing a small number of phrases several times. To reduce the time overhead produced by this monotonous decompression pattern, we exploit the runs of equal symbols in $B^{i}$. More specifically, if a given range in $B^{i}[j..j']$ of length $x=j'-j+1$ has the same symbol $\mathqhv{X}$, then we level-decompress the associated phrase once and multiply the result by $x$ instead of performing the same operations $x$ times. By Lemma~\ref{lem:bwt1}, we already now that the repeated symbols resulted from this run-length decompression will be placed in consecutive ranges of $B^{i-1}$. Therefore, when we process a given run $(x, \mathqhv{X})$ of $B^{i}$, we do not push every pair $(b, y)$ $x$ times to $Q$, but insert $(x,b,y)$ only once. We do the same for the triplets $(b,y,z)$ of $Q'$. Figures~\ref{fig:infbwt}B and~\ref{fig:infbwt}C show an example of how the runs of $B^{i}$ are processed.  

\textbf{Unique context strings.} Another way of exploiting the repetitions in $B^{i}$ is with the unique context phrases. Consider a nonterminal $\mathqhv{X}$ encoded in the grammar tree node $v$. If a given child $y$ of $v$ has a label in $L^{i-1}$, then the context phrase $S$ decompressed from $y$ only appears under the subtree of $v$. This means that if $\mathqhv{X}$ has $x$ occurrences in $B^{i}$, then $S$ has $x$ occurrences in $T^{i-1}$, all with the same left context symbol. Computing the elements in $B^{i-1}$ with unique context strings is simple; for each label $l \in L^{i}$, we get the internal node $v=\mathsf{internalselect}(l-\sigma)$, and its number of children $a=\mathsf{nchildren}(v)$. If $a>2$, then we count the $x$ occurrences of $f^{i}(l) \in R^{i}$ in $B^{i}$ by calling $\mathsf{rank}$ over the FM-index. Then, for every child $y$ of $v$ whose child rank is in the range $[2..a-1]$, we push the pair $(x,b,y)$ to $Q$, where $b \in R^{i-1}$ is the symbol obtained from the left sibling of $y$. Then, during the scan of $B^{i}$, we just skip these $y$ nodes. In $\mathsf{infBWT}$, we carry out this process before inverting $f^{i}$ (Line~\ref{algo:infBWT:inv} of Algorithm~\ref{algo:infBWT}). In Figure~\ref{fig:infbwt}A, node $y_1$ in the subtree of $v$ encodes a unique context string.

\textbf{Inferring the eBWT in parallel.}  Our algorithm for building $Q$ and $Q'$ can run in parallel because of Lemma~\ref{lem:bwt2}. For doing so, we divide $B^{i}$ in $p$ chunks and run the for loop between lines \ref{algo:inf:spar}-\ref{algo:inf:epar} of Algorithm~\ref{algo:inf} independently for every one of them. This process yields $p$ $Q$ and $Q'$ lists, which we concatenate as $Q=Q_1 \cup Q_2,..\cup Q_p$ afterward (we do the same with the $Q'$ lists). Then we continue with the inference of $B^{i-1}$ in serial.

\textbf{Sorting the context strings.} There is a trade-off between time and space in sorting $U$ (Line~\ref{algo:inf:sort} of Algorithm~\ref{algo:inf}). On hand side, decompressing the strings once and maintain them in plain form to compare them can produce a considerable space overhead. On the other, decompressing them every time we access them might be slow in practice. We opted for something in between; we only access the strings when we compare them, but we amortize the decompression time overhead by doing the sorting in parallel. Our approach is simple; we use a serial $\mathsf{countingsort}$ to reorder the strings of $U$ by their first symbols. We logically divide the resulting $U$ in buckets; all the strings beginning with the same symbol are grouped in a bucket. To finish the sorting, we just perform $\mathsf{quicksort}$ on parallel in every bucket, decompressing the strings on demand.

\begin{figure}[!htp]
\centering
\includegraphics[width=0.85\textwidth]{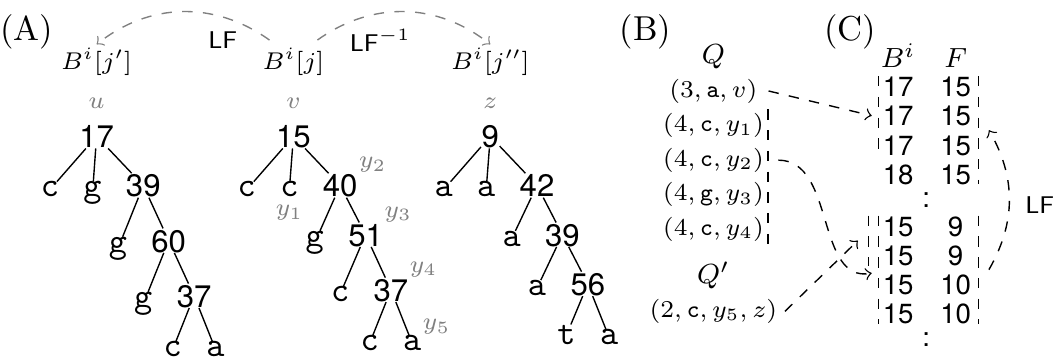}
\caption{(A) Example processing of a symbol $B^{i}[j]=\mathqhv{15}$ by the $\mathsf{nextBWT}$ algorithm. The subtree in the parse tree of $\mathcal{G}$ encoding the phrase $F=\texttt{ccgcca} \in \mathcal{D}^{i-1} \cup I^{i-1}$ of $B^{i}[j]$ is depicted below. The gray symbols are the nodes in $\mathcal{P}$ from which we decode the suffixes of $F$. Node $y_1$ is a unique proper suffix, and nodes $y_{2-5}$ are \emph{SuffPair} suffixes. The left context of $\mathqhv{15}$ in $T^{i}$ is $\mathqhv{17}$, and it is represented in $\mathcal{P}$ with the internal node $u$. Its right context, $\mathqhv{9}$, is represented with the internal node $z$. (B) Pairs inserted to $Q$ and $Q'$ during the processing of $B^{i}[j]$. The leftmost symbols are the frequencies of the suffixes. (C) Processing of $B^{i}[j]$ in run-length mode. Arrays $B^{i}$ and $F$ are the last and first columns of the FM-index, respectively. In $B^{i}$, there is a run for $\mathqhv{15}$ of length 4. Within the run, symbol $\mathqhv{9}$ appears two times as the right context of $\mathqhv{15}$ and $\mathqhv{17}$ appears 3 times as its left context. The $\mathsf{nextBWT}$ algorithm level-decompresses the phrases $\mathqhv{15}$ and $\mathqhv{17}$ only once, and uses the run lengths to include the suffix frequencies to the tuples of $Q$ and $Q'$. This information is represented with dashed lines in the left side of $B^{i}$. Dashed arrows from $Q$ and $Q'$ to $B^{i}$ indicate from which parts of the FM-index the suffix frequencies were extracted.}
\label{fig:infbwt}
\end{figure}

\section{Experiments}

We implemented $\textsf{infBWT}$ as a $\texttt{C++}$ tool called G2BWT. This software uses the $\texttt{SDSL-lite}$ library~\cite{gbmp2014sea} and is available at \url{https://bitbucket.org/DiegoDiazDominguez/lms_grammar/src/bwt_imp2}. We compared the performance of G2BWT against the tools \texttt{eGap} (EG)~\cite{egidi19ext}, \texttt{gsufsort-64} (GS64)~\cite{lou20gsuf} and \texttt{BCR\_LCP\_GSA} (BCR)~\cite{bauer13lw}. EG and BCR are algorithms for constructing the eBWT of a string collection in external or semi-external settings, while GS64 is an in-memory algorithm for building the suffix array of an string collection, but it can also build the eBWT. We also considered the tool \texttt{bwt-lcp-em}~\cite{bon20com} for the experiments. Still, by default it builds both the eBWT and the LCP array, and there is no option to turn off the LCP array, so we discarded it. For BCR, we used the implementation of~\url{https://github.com/giovannarosone/BCR_LCP_GSA}. All the tools were compiled according their authors' description. For G2BWT, we used the compiler flags \texttt{-O3 -msse4.2 -funroll-loops}.

We used five read collections produced from different human genomes~\footnote{\url{https://www.internationalgenome.org/data-portal/data-collection/hgdp}} for building the eBWTs. We concatenated the strings so that dataset 1 contained one collection, dataset 2 contained two collections and so on. All the reads were 152 characters long and had an alphabet of six symbols (\texttt{A,C,G,T,N,\$}). The input datasets are described in Table~\ref{tab:if}.

During the experiments, we limited the RAM usage of EG to at most three times the input size. For BCR, we turned off the construction of the data structures other than the eBWT, and left the memory parameters by default. In the case of GS64, we used the parameter \texttt{--bwt} to indicate that only the eBWT had to be built. The other options were left by default. For G2BWT, we first grammar-compressed the datasets using \texttt{LPG}~\cite{diaz2021gram} an then used the resulting files as input for building the eBWTs. In addition, we let G2BWT to use up to 18 threads. The other tools ran in serial as none of them supported multi-threading.

All the competitor tools manipulate the input in plain form while G2BWT processes the input in compressed space. In this regard, BCR, EG, and GS64 have an extra cost for decompressing the text that G2BWT does not have. To simulate that cost, we compressed the datasets using \texttt{p7-zip} and then measured the time for decompressing them. We assessed the performance of G2BWT first without adding that cost to BCR, EG, and GS64, and then adding it. All the experiments were carried out on a machine with Debian 4.9, 736 GB of RAM, and processor Intel(R) Xeon(R) Silver @ 2.10GHz, with 32 cores.

\begin{table}[htbp]
\centering
\begin{tabular}{cccccc}
\toprule
\textbf{Number of} & \textbf{Plain}    & \textbf{Compressed} &\textbf{\% eBWT} \\
\textbf{collections}    & \textbf{size (GB)}& \textbf{size (GB)}  & \textbf{runs}   \\
\midrule
1             & 12.77     & 3.00            & 31.46                            \\
2             & 23.43     & 5.30            & 26.11                            \\
3             & 34.30     & 7.41            & 22.70                            \\
4             & 45.89     & 9.38            & 20.12                            \\
5             & 57.37     & 11.31           & 18.74                            \\
\bottomrule
\end{tabular}
\caption{Input datasets for building the eBWTs. The compressed size is the space of the \texttt{LPG} representation. The percentage of eBWT runs of a dataset is measured as its number of eBWT runs divided by its total number of symbols, and multiplied by 100.}
\label{tab:if}
\end{table}

\section{Results and discussion}

The results of our experiments without considering the decompression costs for BCR, GS64 and EG are summarized in Figure~\ref{fig:exp1}. The fastest method for building the eBWT was GS64, with a mean time of 0.91 $\mu$secs per input symbol. It is then followed by BCR, G2BWT and EG, with mean times of 0.94, 1.32 and 2.61 $\mu$secs per input symbol, respectively (Figure~\ref{fig:exp1}A). Regarding the working space, the most efficient method was BCR, with an average space of 0.17 bytes of RAM per input symbol. G2BWT is the second most efficient, with an average of 0.78 bytes. EG and GS64 are much more expensive, using 3.07 and 10.54 bytes on average, respectively (Figure~\ref{fig:exp1}B). Adding the decompression overhead to BCR, GS64 and EG increases the running times by 0.02 $\mu$secs per input symbols in all the cases. This negligible penalty is due to $\texttt{p7-zip}$ is fast at decompressing the text. 

Despite G2BWT is not the most efficient method on average, it is the only one that becomes more efficient as the size of the collection increases. While the space and time functions of BCR, EG and GS64 seem to be linear with respect the input size, and with a non-negative slope in most of the cases, in G2BWT these functions resemble a decreasing logarithmic function (see Figures~\ref{fig:exp2}A and \ref{fig:exp2}B). This behavior is due to G2BWT processes several occurrences of a given phrase as a single unit, unlike the other methods. Thus, the cost of building the eBWT depends more on the number of distinct patterns in the input rather than on its size. As genomes are repetitive, appending new read collections to a dataset increases its size considerably, but not its number of distinct patterns. As a consequence, the per-symbol processing efficiency increases.

The performance improvement of G2BWT is also observed when we asses the trade-off between time and space (Figure~\ref{fig:exp1}C). Although BCR is the one with the best trade-off, the instances of G2BWT move toward the bottom-left corner of the graph as we concatenate more read collections. In other words, the more massive and repetitive the input is, the less is the time and space we spend on average per input symbol to build the eBWT. This is an important remark, considering the input collections are not as repetitive as other types of texts. In most of the datasets, the number of eBWT runs is relatively high (see Table~\ref{tab:if}). 

\begin{figure}[htbp]
\centering
\includegraphics[width=\textwidth]{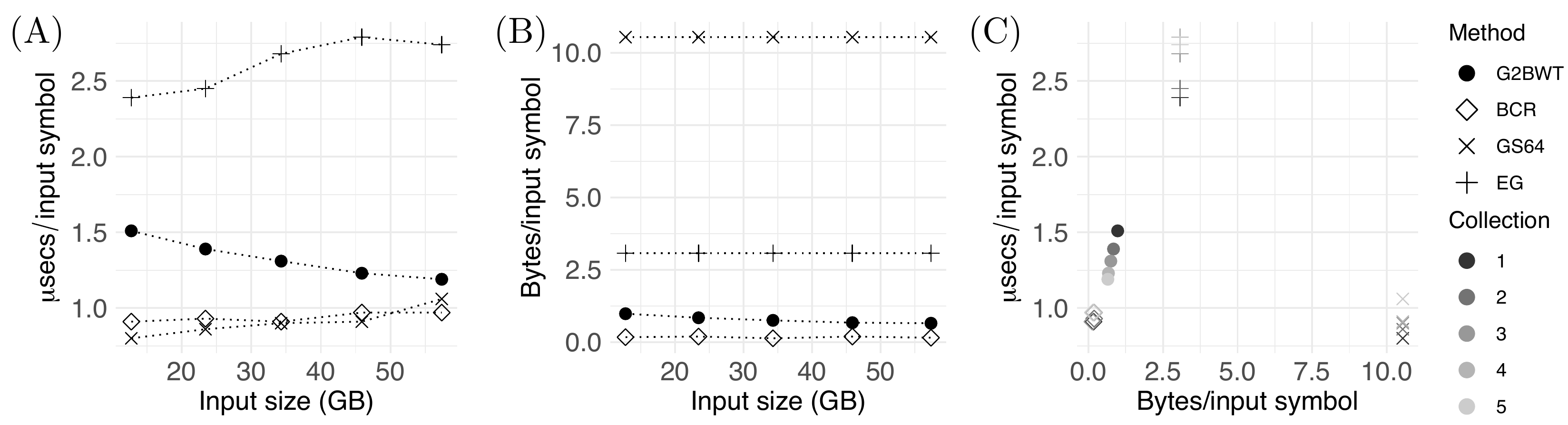}
\caption{Performance of the tools for building the eBWTs. These results do not include the decompression overhead for BCR, GS64, and EG.}
\label{fig:exp1}
\end{figure}

We estimated how many read collections we have to append to achieve a performance equal or better than that of BCR. For that purpose, we built a linear regression for BCR and a logaritmic regression for G2BWT. Subsequently, we checked at which value of $x$ the two models intersect (Figure~\ref{fig:exp2}). For time, we obtained the function $y = 0.887876 + 0.001442x$ for BCR while for G2BWT we obtain $y= 2.0655 - \ln 0.2161x$. The value where these two functions intersect in the x-axis is around $111$ (dashed vertical line of Figure~\ref{fig:exp2}A). Thus, we expect BCR and G2BWT to have a similar performance when the size of the input read dataset is about 111GB. For datasets above that value, we expect G2BWT to be faster.

For the space, the function for BCR was $y=0.1782030 -0.0003511x$ and for G2BWT was $y=1.5574-\ln  0.2277x$. In this case, the functions do not intersect. This is expected as BCR is implemented almost entirely on disk. Therefore, its RAM consumption stays low and stable as the input increases. On the other hand, the RAM usage of G2BWT increases at a slow rate, but it still depends on the input size.

\begin{figure}[h]
\centering
\includegraphics[width=0.7\textwidth]{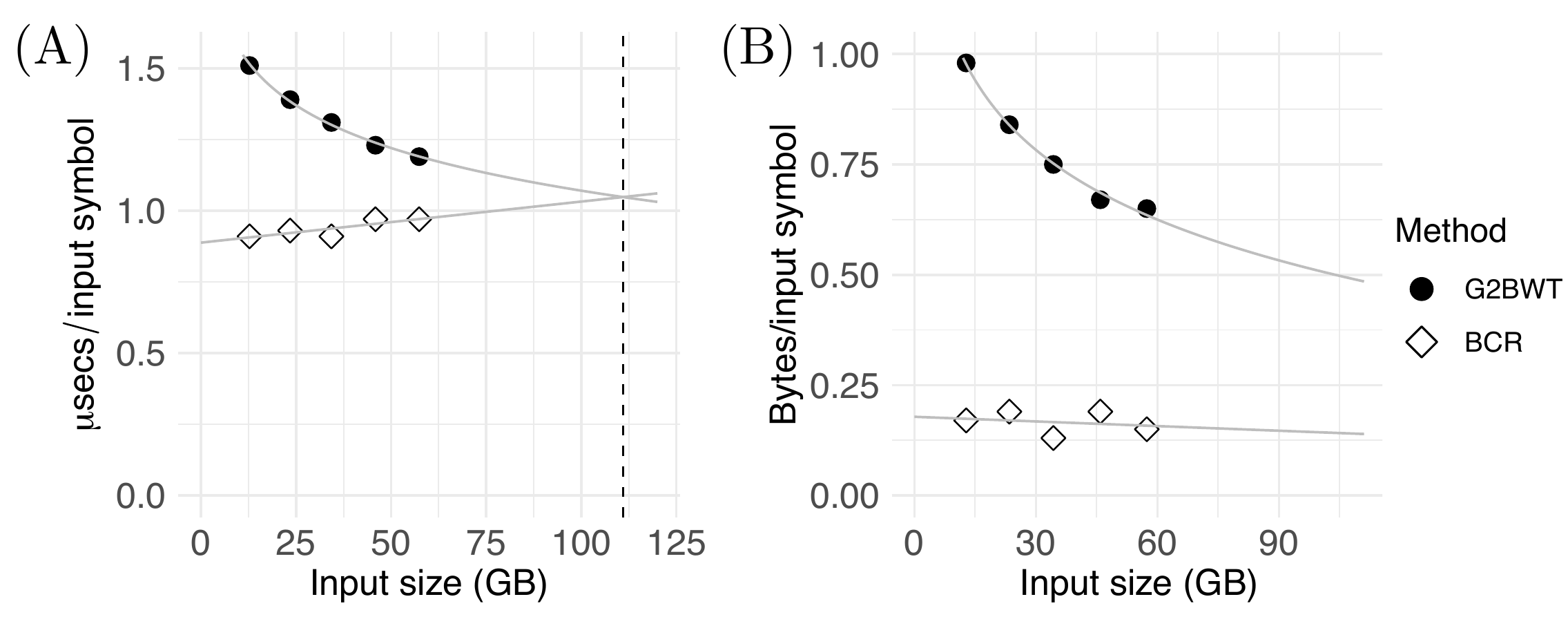}
\caption{Regressions for the performance of BCR and G2BG in Figure~\ref{fig:exp1}. (A) Model for the time. Dashed vertical line indicates where the two functions intersect. (B) Model for the space.}
\label{fig:exp2}
\end{figure}

\section{Concluding remarks}

We introduced a method for building the eBWT that works in compressed space, and whose performance improves as the text becomes massive and repetitive. Our experimental results showed that algorithms that work on top of grammars can be competitive in practice, and even more efficient. Our research is now focused on improving the time of $\mathsf{infBWT}$ and extending its use to build other data structures such as the LCP array. 

\bibliography{main}
\newpage
\section{Appendix}
\appendix

\section{Representing the FM-index}

Once we infer $B^{i-1}$, we produce a RLFM-index from it. This task is not difficult as the resulting $B^{i-1}$ is actually half-way of being run-length compressed. The only drawback of this representation is that manipulating $B^{i-1}$ can be slow. The RLFM-Index usually represents the BWT using a Wavelet Tree~\cite{grossi2003}. In our case, this feature implies that accessing a symbol in $B^{i}$ and performing $\mathsf{rank}$ has a slowdown factor of $\mathcal{O}(\log r)$. This value can be too slow for our purposes. 
In practice, we use a bit-compressed array to represent $B^{i-1}$ instead of a Wavelet Tree. We also include an integer vector $K$ of the same size of $B^{i-1}$. In every $K[j]$ we store the rank of symbol $B^{i-1}[j]$ up to position $j$. Thus, when we need to perform $\mathsf{LF}$ over $B^{i-1}[j]$, we replace the $\mathsf{rank}$ part in the equation with $K[j]$. Notice it is not necessary to fully load $K$ into main memory as its access pattern is linear. We load it in chunks as we scan $B^{i-1}$ during iteration $i-1$. 

\section{Improving the algorithm even further}

The most expensive aspect of our algorithm is the decompression of phrases. First when we scan $B^{i}$, and then when we sort the suffixes in $U$. Every time we require a string, we extract it on demand directly from the grammar and then we discard its sequence to avoid using extra space. This process is expensive in practice as accessing the grammar several times makes an extensive use of $\mathsf{rank}$ and $\mathsf{select}$ operations. The run-length decompression of the symbols of $B^{i}$ along with the parallel sorting of $U$ try to deal with this overhead, but we can do better.

Consider, for instance, three eBWT runs $\mathqhv{X}_1,\mathqhv{Y}$ and $\mathqhv{X}_2$ placed consecutively at some range of $B^{i}$. In the for loop of $\mathsf{nextBWT}$ (lines \ref{algo:inf:spar}-\ref{algo:inf:epar} of Algorithm~\ref{algo:inf}), both $\mathqhv{X}_1$ and $\mathqhv{X}_2$ are decompressed independently as we are unaware that they are close one each other. We can solve this problem by maintaining a small hash table that record the information of the last visited symbols of $B^{i}$. Thus, when we reach a new eBWT run, we check first if its symbol is in the hash table. If so, then we extract the information of its suffixes from the hash and push it into $Q$ and $Q'$. If not, then we level decompress the run symbol from scratch and store its suffix information in the hash table to avoid extra work in the future. We can limit the size of the hash table so it is always maintained in some of the L1-3 caches, and when we exceed this limit, we just simply reset the hash.

Notice that the copies of the suffixes of $\mathqhv{X}_1$ and $\mathqhv{X}_2$ will be placed consecutively in $B^{i-1}$. We can exploit that fact by not adding the suffix information of $\mathqhv{X}_2$ to $Q$ and $Q'$. Instead, we increase the frequency of the recently added tuples of $\mathqhv{X}_1$ by the length of run of $\mathqhv{X}_2$. In this way, we not only save computing time, but also working space.

Another way of improving the performance is during the sorting of $U$. When we sort the distinct buckets in parallel using $\mathsf{quicksort}$, we can maintain the pivot in plain form, and decompress the strings we compare against it on demand. This change can potentially avoid several unnecessary accessions to the grammar with almost no cost.

\end{document}